\newcommand{\LCP}{\ensuremath{\mathrm{LCP}}}
\newcommand{\LCS}{\ensuremath{\mathrm{LCS}}}
\newcommand{\ZFT}{\ensuremath{\mathrm{ZFT}}}
\newcommand{\expansion}[1]{\ensuremath{\langle #1 \rangle}}
\begin{document}

\title{Suffixient Sets}

\author{Lore Depuydt\inst{1}\orcidID{0000-0001-8517-0479} \and
Travis Gagie\inst{2}\orcidID{0000-0003-3689-327X} \and\\
Ben Langmead\inst{3}\orcidID{0000-0003-2437-1976} \and
Giovanni Manzini\inst{4}\orcidID{0000-0002-5047-0196} \and
Nicola Prezza\inst{5}\orcidID{0000-0003-3553-4953}}

\authorrunning{L. Depuydt et al.}

\institute{Department of Information Technology, Ghent University, Belgium \and
Faculty of Computer Science, Dalhousie University, Canada \and
Department of Computer Science, Johns Hopkins University, USA \and
Department of Computer Science, University of Pisa, Italy \and
Department of Environmental Sciences, Informatics and Statistics,\\
Ca' Foscari University of Venice, Italy}

\maketitle

\begin{abstract}
We define a suffixient set for a text $T [1..n]$ to be a set $S$ of positions between 1 and $n$ such that, for any edge descending from a node $u$ to a node $v$ in the suffix tree of $T$, there is an element $s \in S$ such that $u$'s path label is a suffix of $T [1..s - 1]$ and $T [s]$ is the first character of $(u, v)$'s edge label.  We first show there is a suffixient set of cardinality at most $2 \bar{r}$, where $\bar{r}$ is the number of runs in the Burrows-Wheeler Transform of the reverse of $T$.  We then show that, given a straight-line program for $T$ with $g$ rules, we can build an $O (\bar{r} + g)$-space index with which, given a pattern $P [1..m]$, we can find the maximal exact matches (MEMs) of $P$ with respect to $T$ in $O (m \log (\sigma) / \log n + d \log n)$ time, where $\sigma$ is the size of the alphabet and $d$ is the number of times we would fully or partially descend edges in the suffix tree of $T$ while finding those MEMs.

\keywords{Suffixient sets \and Maximal exact matches \and Suffix trees \and Burrows-Wheeler Transform.}
\end{abstract}

\section{Introduction}
\label{sec:introduction}

If we have the suffix tree of a text $T [1..n]$ then, given a pattern $P [1..m]$, we can compute in $O (m)$ time all the maximal exact matches (MEMs) of $P$ with respect to $T$ by starting at the root and repeatedly descending until we can descend no further and then following suffix links until we can descend again.  A MEM (sometimes also called a super MEM or SMEM) is a substring $P [i..j]$ such that either $i = 1$ or $P [i - 1..j]$ does not occur in $T$, and either $j = m$ or $P [i..j + 1]$ does not occur in $T$.  Figure~\ref{fig:descents} shows the 11 times we fully or partially descend 10 distinct edges while finding the MEMs when $m = 34$ and $n = 35$ and
\begin{eqnarray*}
P & = & \mathsf{1001001010010010100100101001010010} \\
T & = & \mathsf{0100101001001010010100100101001001\$} \,.
\end{eqnarray*}

\begin{figure}[t]
\begin{center}
\includegraphics[width=.88\textwidth]{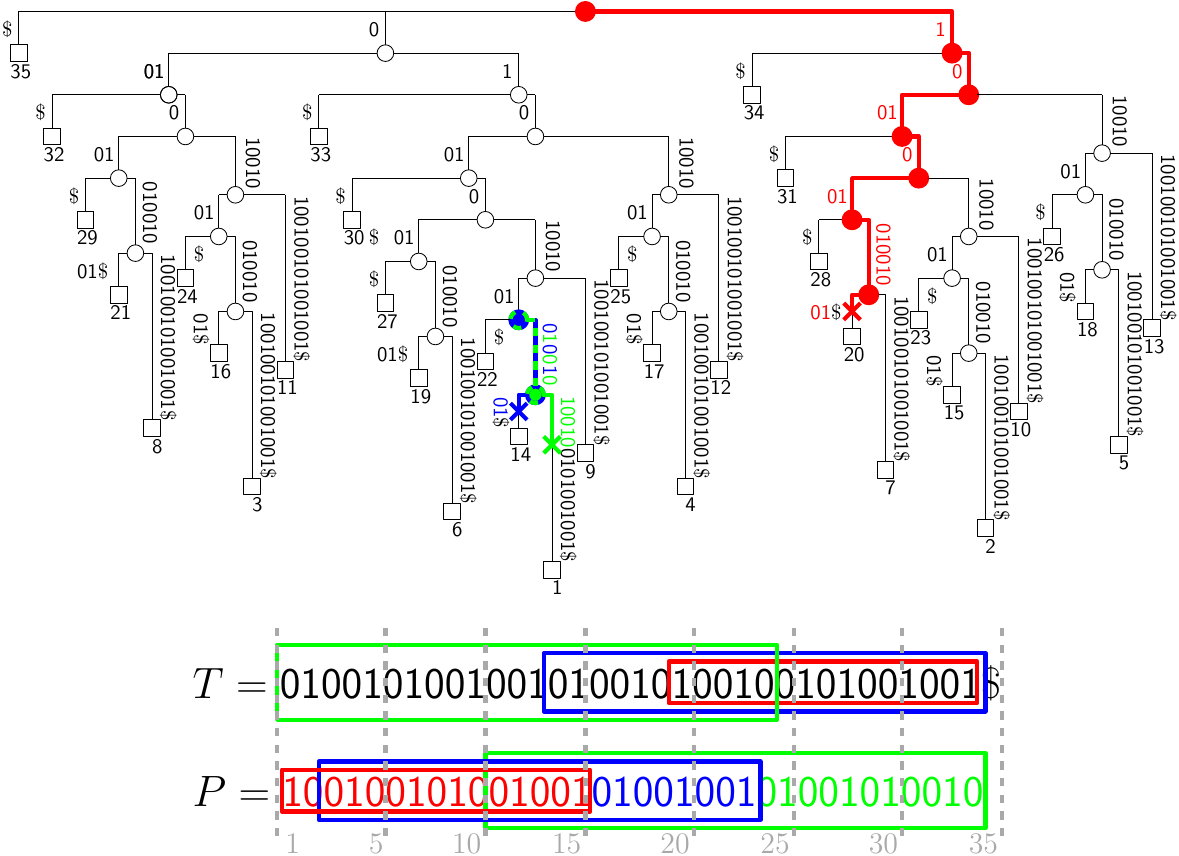}
\caption{The 11 times we fully or partially descend 10 distinct edges in the suffix tree of $T$ {\bf (above)} while finding the 3 MEMs for our example {\bf (below)}.  The MEMs are shown boxed in $P$ and $T$, with the characters' colours in $P$ also indicating which path we are following in the tree when we read them.  The characters in the box for a MEM that are a different colour from the box are the path label of the the node we reach by suffix links and descend from when finding the end of that MEM.  We descend the line alternating blue and green twice.}
\label{fig:descents}
\end{center}
\end{figure}

If we store the Karp-Rabin hashes of edges' labels and preprocess $P$ in $O (m \log (\sigma) / \log n)$ time such that we have constant-time access to the hashes of its substrings~\cite{Pre18}, then we can fully descend edges in constant time.  To also partially descend edges quickly, we can store
\begin{itemize}
\item for each edge, the starting position in $T$ of an occurrence of that edge's label,
\item a data structure that, given $i$ and $j$ and constant-time access to the hashes of $P$'s substrings, quickly returns the length $\LCP (P [i..m], T [j..n])$ of the longest common prefix of $P [i..m]$ and $T [j..n]$.
\end{itemize}
If the label of the next edge we should descend has length $\ell$ and its hash does not match the hash of the next $\ell$ characters $P [i..i + \ell - 1]$ of $P$, then $\LCP (P [i..m], T [j..n])$ tells us how far down the edge we can descend, where $j$ is the stored starting position of an occurrence of the edge's label.

Given a straight-line program (SLP) for $T$ with $g$ rules, we can turn it into an $O (g)$-space data structure that answers such LCP queries --- and also symmetric longest common suffix (LCS) queries --- in $O (\log n)$ time; see the appendix for details.  This gives us an augmented suffix tree that takes $O (n + g)$ space and finds all the MEMs of $P$ with respect to $T$ in $O (m \log (\sigma) / \log n + d)$ time plus $O (\log n)$ time per MEM, where $d$ is the number of times we fully or partially descend edges in the suffix tree (so $d = 11$ for our example).  There is a low probability of error due to the Karp-Rabin hashing.

MEM-finding has been important in bioinformatics at least since Li~\cite{Li13} introduced BWA-MEM, but linear-space and even entropy-compressed data structures are impractical when dealing with many massive but highly repetitive datasets such as pangenomes.  In this paper we show how to reduce our space bound to $O (\bar{r} + g)$, where $\bar{r}$ is the number of run in the Burrows-Wheeler Transform (BWT) of the reverse of $T$ (see~\cite{KK22} for discussion), while increasing the query time only to $O (m \log (\sigma) / n + d \log n)$ (still with a low probability of error due to the Karp-Rabin hashing).   The only indexes we know with comparable functionality and space bounds~\cite{GNP20,KK23,Nav23,ROLGB22} are significantly more complicated and either larger or slower in at least some cases.  Our compressed index is based on the new notion of suffixient sets, which may be of independent interest.

\section{Definitions and Size Bounds}
\label{sec:definitions_and_size_bounds}

We say a set of positions in $T$ is suffixient if the suffixes of the prefixes of $T$ ending at those positions are sufficient to cover the suffix tree of $T$ in a certain way:

\begin{definition}
\label{def:suffixient_set}
A {\em suffixient set} for a text $T [1..n]$ is a set $S$ of numbers between 1 and $n$ such that, for any edge descending from a node $u$ to a node $v$ in the suffix tree of $T$, there is an element $s \in S$ such that $u$'s path label is a suffix of $T [1..s - 1]$ and $T [s]$ is the first character of $(u, v)$'s edge label.
\end{definition}

\noindent An equivalent way to define suffixient sets is in terms of right-maximal substrings of $T$, which are substrings that occur in $T$ immediately followed by at least two distinct characters:

\begin{definition}
\label{def:right-maximal}
A {\em suffixient set} for a text $T [1..n]$ is a set $S$ of numbers between 1 and $n$ such that, for any right-maximal substring $\alpha$ of $T$ and any character $c$ that immediately follows an occurrence of $\alpha$ in $T$, there is an element $s \in S$ such that $\alpha\,c$ is a suffix of $T [1..s]$.
\end{definition}

\noindent The following lemma shows there is always a suffixient set for $T$ of cardinality at most $2 \bar{r}$.  We note that suffixient sets can be even smaller, however: for $T$ in our example $\bar{r} = 9$ but $\{14, 20, 33, 35\}$ is still suffixient, as illustrated in Figure~\ref{fig:cover}.

\begin{figure}[t]
\begin{center}
\includegraphics[width=.88\textwidth]{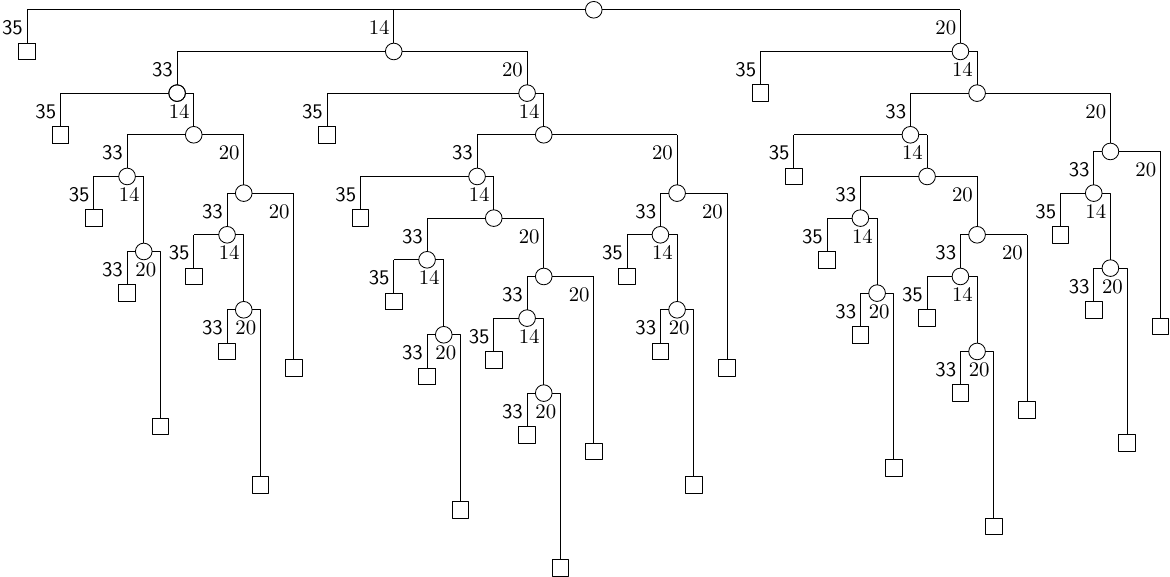}
\caption{The set $\{14, 20, 33, 35\}$ is suffixient for $T$ in our example.}
\label{fig:cover}
\end{center}
\end{figure}

\begin{lemma}
\label{lem:upper_bound}
The set of positions in $T$ of characters at the at most $2 \bar{r}$ run boundaries in the BWT of the reverse of $T$, is suffixient for $T$.
\end{lemma}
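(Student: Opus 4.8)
\noindent The plan is to translate the whole statement to the reverse string $\bar T$ of $T$ and reason about the suffix array and the BWT of $\bar T$. Read through reversal, Definition~\ref{def:right-maximal} says: a substring $\alpha$ is right-maximal in $T$ iff its reverse $\bar\alpha$ is \emph{left-maximal} in $\bar T$ (immediately preceded by at least two distinct characters in $\bar T$); a character $c$ following an occurrence of $\alpha$ in $T$ is a character $c$ preceding the matching occurrence of $\bar\alpha$ in $\bar T$; and ``$\alpha\,c$ is a suffix of $T[1..s]$'' becomes ``$c\,\bar\alpha$ is a prefix of the suffix of $\bar T$ starting at the position corresponding to $s$''. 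Now the suffixes of $\bar T$ beginning with a fixed substring $\beta$ form a contiguous interval $[lo,hi]$ of the suffix array of $\bar T$, and (up to the handling of the terminator) the characters immediately preceding those occurrences are exactly $\mathrm{BWT}[lo..hi]$; thus ``$\beta$ is left-maximal'' means ``$\mathrm{BWT}[lo..hi]$ is not a single repeated character'', and ``some occurrence of $\beta$ is preceded by $c$'' means ``$c$ occurs in $\mathrm{BWT}[lo..hi]$''. So the lemma reduces to the following claim about the BWT of $\bar T$.

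\smallskip
\noindent\emph{Claim.} For every interval $[lo,hi]$ of BWT positions with $\mathrm{BWT}[lo..hi]$ not monochromatic and every character $c$ occurring in $\mathrm{BWT}[lo..hi]$, there is a run boundary $i\in[lo,hi]$ (the first or last position of a maximal run of the BWT) with $\mathrm{BWT}[i]=c$.

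\smallskip
\noindent I would prove the Claim by looking at the first position $i^\ast$ and the last position $i_\ast$ in $[lo,hi]$ that carry $c$. If $i^\ast>lo$ then $\mathrm{BWT}[i^\ast-1]\neq c$, so $i^\ast$ starts a maximal run; symmetrically, if $i_\ast<hi$ then $i_\ast$ ends a maximal run; either way we are done. Otherwise $\mathrm{BWT}[lo]=\mathrm{BWT}[hi]=c$ while some interior position carries a character $\neq c$, so the maximal block of $c$'s containing $lo$ ends at some position $j$ with $lo\le j<hi$, and this $j$ ends a maximal run and carries $c$.

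\smallskip
\noindent To assemble the proof: given a right-maximal $\alpha$ in $T$ and a following character $c$, take the BWT interval $[lo,hi]$ of $\bar\alpha$ (not monochromatic since $\bar\alpha$ is left-maximal), use the Claim to get a run boundary $i\in[lo,hi]$ with $\mathrm{BWT}[i]=c$, and let $s$ be the position in $T$ of that occurrence of the character $\mathrm{BWT}[i]$; unwinding the reversal shows $\alpha\,c$ is a suffix of $T[1..s]$, so the set of all such positions is suffixient. It has at most $2\bar r$ elements, because the BWT of $\bar T$ has $\bar r$ runs, hence at most $2\bar r$ run boundaries, and distinct run boundaries give distinct positions since the suffix array is a permutation. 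The step I expect to demand the most care is the terminator/wrap-around bookkeeping in the reduction: one has to verify that $\mathrm{BWT}[i]$ really is the character immediately preceding the relevant occurrence of $\bar\alpha$ (not merely its cyclic predecessor), and that suffix-tree edges whose label begins with the terminator $\$$ are covered --- which they are, via the length-one BWT run of the unique $\$$, corresponding to position $n$ of $T$.
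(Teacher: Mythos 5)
Your proposal is correct and follows essentially the same route as the paper's proof: identify the contiguous BWT interval (of the reverse of $T$) holding the characters that follow occurrences of $\alpha$, observe it is not monochromatic because $v$ has a sibling, and conclude some occurrence of $c$ in it lies at a run boundary. You simply spell out the final step (the first/last occurrence of $c$ in the interval, or the end of the initial block of $c$'s, is a run boundary) and the reversal/terminator bookkeeping, which the paper leaves implicit.
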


\begin{proof}
Consider an edge $(u, v)$ descending from a node $u$ to a node $v$ in the suffix tree of $T$.  Let $\alpha$ be $u$'s path label and $c$ be the first character in $(u, v)$'s edge label.  By the definition of the BWT, the occurrences of characters immediately following occurrences of $\alpha$ in $T$ are consecutive in the BWT of the reverse of $T$.  By the definition of suffix trees, $v$ must have a sibling, so not all the characters immediately following occurrences of $\alpha$ in $T$ are copies of $c$.  Therefore, for some $s$ such that $T [s]$ is at a run boundary in the BWT of the reverse of $T$, we have $T [s - |\alpha|..s] = \alpha\,c$.
\qed
\end{proof}

\noindent It is not difficult to show that any suffixient set for $T$ is also a string attractor~\cite{KP18} for $T$ --- that is, for any non-empty substring $T [i..j]$ of $T$, some occurrence of $T [i..j]$ contains $T [s]$ with $s \in S$ --- so, if we denote the cardinality of the smallest suffixient set for $T$ by $\chi$, we have $\gamma \leq \chi \leq 2 \bar{r}$, where $\gamma$ is the size of the smallest string attractor for $T$.  Due to space constraints, however, we leave the proof to Appendix~\ref{app:omitted_proof}.  In the full version of this paper we will combine this result with Kempa and Prezza's~\cite{KP18} results on string attractors to show there is an $O (\chi \log (n / \chi))$-space index with which we can find the MEMs of $P$ with respect to $T$ in $O (m \log (\sigma) / \log n + d \log n)$ time.

\begin{lemma}
\label{lem:attractor}
Any suffixient set $S$ for $T$ is a string attractor for $T$.
\end{lemma}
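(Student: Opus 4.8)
The plan is to argue directly from the right-maximal formulation of suffixient sets (Definition~\ref{def:right-maximal}). Fix an arbitrary non-empty substring $w = T[i..j]$ and put $k = |w|$; the goal is to exhibit an occurrence of $w$ in $T$ that contains $T[s]$ for some $s \in S$. Let $p$ be the largest integer in $\{0, 1, \dots, k-1\}$ for which the prefix $w[1..p]$ is right-maximal in $T$; this set is non-empty because the empty string is right-maximal (it is immediately followed in $T$ by every distinct character of $T$, and $T$, ending with a unique $\$$, has at least two distinct characters). Write $\alpha = w[1..p]$ and $c = w[p+1]$. Since $\alpha c$ is a prefix of $w = T[i..j]$ it occurs in $T$, so $c$ immediately follows an occurrence of the right-maximal substring $\alpha$; Definition~\ref{def:right-maximal} then yields $s \in S$ with $\alpha c$ a suffix of $T[1..s]$, that is, $T[s-p..s] = \alpha c = w[1..p+1]$ --- an occurrence of the length-$(p+1)$ prefix of $w$ ending at position $s$.

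Next I would extend this occurrence rightwards until it spells out all of $w$; the resulting occurrence of $w$ then contains $s$ since $0 \le p \le k-1$ (when $p = k-1$ the occurrence already equals $w$ and there is nothing to do). The driving observation is that, by maximality of $p$, for every $\ell$ with $p+1 \le \ell \le k-1$ the prefix $w[1..\ell]$ is \emph{not} right-maximal, so there is at most one character $x$ such that $w[1..\ell]\,x$ occurs in $T$; and because $w[1..\ell+1] = w[1..\ell]\,w[\ell+1]$ does occur in $T$ (being a prefix of $T[i..j]$), that unique $x$ must be $w[\ell+1]$. A routine induction on $\ell$ then propagates the occurrence: having established $T[s-p..s'] = w[1..\ell]$ with $p+1 \le \ell \le k-1$ and $s' < n$, the character $T[s'+1]$ witnesses an occurrence of $w[1..\ell]\,T[s'+1]$ in $T$, which forces $T[s'+1] = w[\ell+1]$ and hence $T[s-p..s'+1] = w[1..\ell+1]$.

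The one delicate point --- and the step I expect to be the main obstacle --- is ensuring this induction never runs past the right end of $T$, i.e. that the hypothesis $s' < n$ really does hold at each stage with $\ell \le k-1$. I would settle it with the unique end-marker of $T$: when $\ell \le k-1$ we have $i + \ell - 1 \le j - 1 \le n - 1$, so the character $w[\ell] = T[i+\ell-1]$ is not the terminator $\$$, which occurs in $T$ only at position $n$; since also $T[s'] = w[\ell]$, the position $s'$ must differ from $n$, and therefore $s' < n$. Hence the extension completes, yielding an occurrence of $w$ that contains $T[s]$ with $s \in S$, and since $w$ was an arbitrary non-empty substring this shows $S$ is a string attractor for $T$. (The whole argument can equally be recast through the suffix tree using Definition~\ref{def:suffixient_set}, taking $u$ to be the deepest node whose path label is a proper prefix of $w$ and $(u,v)$ the edge out of $u$ along which $w$ continues; the combinatorial content is identical.)
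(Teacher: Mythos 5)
Your proof is correct and is essentially the paper's argument: your longest right-maximal proper prefix $\alpha = w[1..p]$ is exactly the path label of the parent $u$ of the locus of $w$ in the suffix tree, and your inductive rightward extension just unpacks the paper's one-line observation that, since $(u,v)$ is a single edge, every occurrence of $\alpha\,c$ is contained in an occurrence of $w$. The only differences are cosmetic --- you work from Definition~\ref{def:right-maximal} rather than Definition~\ref{def:suffixient_set} and make the end-of-text issue explicit via the terminator --- as you yourself note in your closing parenthetical.
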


\noindent 

\section{Compressed Index}
\label{sec:compressed_index}

For clarity and without loss of generality we assume that all the characters in $P$ occur in $T$; otherwise, we can split $P$ into maximal substrings containing only character that do occur in $T$ and process those substrings independently.

Suppose we are given an SLP for $T$ with $g$ rules and we build a suffixient set $S$ for $T$ with cardinality at most $2 \bar{r}$.  The two components of our compressed index are the SLP-based LCP/LCS data structure mentioned in Section~\ref{sec:introduction} and described in Appendix~\ref{app:SLP}, and a z-fast trie~\cite{BBPV09} for the reversed prefixes of $T$ ending at positions in $S$.  The z-fast trie takes $O (\bar{r})$ space and, given $i$ and constant-time access to the hashes of $P$'s substrings, returns in $O (\log \min (m, n))$ time an element $\ZFT (P [1..i]) \in S$ such that $T [1..\ZFT (P [1..i])]$ has the longest common suffix with $P [1..i]$ of any prefix of $T$ ending at a position in $S$.

Algorithm~\ref{alg:MEMs} shows pseudocode for how we query our compressed index to find the MEMs of $P$ with respect to $T$.  Figures~\ref{fig:substrings} and~\ref{fig:trace} in Appendix~\ref{app:omitted_figures} show the prefixes of $T$ ending at positions in the suffixient set $\{14, 20, 33, 35\}$ in our example, the suffixes of $T$ that immediately follow them, the substrings of $P$ we consider while running Algorithm~\ref{alg:MEMs}, and a trace of how Algorithm~\ref{alg:MEMs} runs on our example.

\begin{algorithm}[t]
\caption{Pseudocode for finding the MEMs of $P [1..m]$ with respect to $T [1..n]$ using our compressed index.}
\label{alg:MEMs}
\begin{algorithmic}[1]
\State{$i \gets 1$}
\State{$\ell \gets 0$}
\While{$i \leq m$} \label{line:loop}
  \State{$j \gets \ZFT (P [1..i])$} \label{line:ZFT}
  \State{$b \gets \LCS (P [1..i], T [1..j])$} \label{line:LCS}
  \If{$b \leq \ell$} \label{line:MEM_check}
    \State{{\bf report} $(i - \ell, i - 1)$} \label{line:report}
  \EndIf
  \State{$f \gets \LCP (P [i + 1..m], T [j + 1..n])$} \label{line:LCP}
  \State{$i \gets i + f + 1$} \label{line:update_i}
  \State{$\ell \gets b + f$} \label{line:update_ell}
\EndWhile
\State{{\bf report} $(i - \ell, i - 1)$}
\end{algorithmic}
\end{algorithm}

Our algorithm is quite simple but we admit that it is not immediately obvious why it is correct, nor how to bound its running time.  Our arguments for both rests on the following technical lemma:

\begin{lemma}
\label{lem:invariant}
Whenever we reach the start of the while-loop in Line~\ref{line:loop} with $i \leq m$,
\begin{itemize}
\item we have reported all the MEMs that end strictly before $P [i - 1]$,
\item the longest common suffix of $P [1..i - 1]$ and any prefix of $T$ has length $\ell$,
\item the longest common suffix of $P [1..i]$ and any prefix of $T$ is the path label of some node $u$ in the suffix tree of $T$, followed by the first character in the label of the edge from $u$ to one of its children.
\end{itemize}
\end{lemma}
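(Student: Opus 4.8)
The natural strategy is induction on the number of iterations of the while-loop. The base case is the first time we reach Line~\ref{line:loop}, where $i = 1$ and $\ell = 0$: there are no MEMs ending strictly before $P[0]$, the longest common suffix of the empty string $P[1..0]$ with any prefix of $T$ has length $0 = \ell$, and the longest common suffix of $P[1..1]$ with a prefix of $T$ is a single character $c$ that occurs in $T$ (by our assumption), hence is the first character of an edge leaving the root; so all three invariants hold with $u$ the root. For the inductive step, I would assume the three invariants hold at the start of one iteration with index $i \le m$, trace through Lines~\ref{line:ZFT}--\ref{line:update_ell}, and show they still hold at the start of the next iteration with the new index $i' = i + f + 1$, provided $i' \le m$.

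\textbf{Key steps of the inductive step.} First I would argue that $j = \ZFT(P[1..i])$ is chosen so that $T[1..j]$ realizes the longest common suffix of $P[1..i]$ with \emph{any} prefix of $T$: by the third invariant that longest common suffix is $\alpha c$ where $\alpha$ is right-maximal (it is the path label of an internal node $u$ with at least two children) and $c$ follows some occurrence of $\alpha$; by the defining property of a suffixient set (Definition~\ref{def:right-maximal}), $\alpha c$ is a suffix of $T[1..s]$ for some $s \in S$, so the z-fast trie returns some $j \in S$ with $T[1..j]$ having a common suffix with $P[1..i]$ at least as long as $\alpha c$, and it cannot be longer. Hence $b = \LCS(P[1..i], T[1..j]) = |\alpha c|$. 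Next, comparing $b$ with $\ell$: by the second invariant $\ell$ is the length of the longest common suffix of $P[1..i-1]$ with any prefix of $T$, so $P[i-b+1..i]$ occurs in $T$ but if $b \le \ell$ then extending the match that ended at position $i-1$ by the character $P[i]$ is impossible in the sense required — I need to check carefully that $b \le \ell$ is exactly the condition that $P[i-1]$ is the right end of a MEM, i.e.\ that $P[i-\ell..i]$ does not occur in $T$ — and that when $b \le \ell$ the reported pair $(i-\ell, i-1)$ is indeed a MEM (its left end is maximal because of how $\ell$ was built in previous iterations), restoring the first invariant. Then $f = \LCP(P[i+1..m], T[j+1..n])$ measures how far the match $\alpha c = P[i-b+1..i]$ extends to the right in $T$ past position $j$; since $T[1..j]$ ends with the longest common suffix of $P[1..i]$ and $T[j+1..]$ begins down the edge from $u$ consistent with $P$, the string $P[i-b+1..i+f]$ occurs in $T$ and this is the longest common suffix of $P[1..i+f] = P[1..i'-1]$ with any prefix of $T$, which has length $b + f = \ell'$, giving the second invariant at $i'$. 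Finally, for the third invariant at $i'$: the longest common suffix of $P[1..i']$ with a prefix of $T$ extends $P[i-b+1..i+f]$ by one more character $P[i']$; because $f$ was the \emph{longest} common prefix, the character $T[j+f+1]$ differs from $P[i+f+1] = P[i']$, which forces the substring $P[i'-b-f..i'-1] = P[i-b+1..i+f]$ to be right-maximal in $T$ (two distinct right-extensions), i.e.\ the path label of an internal node, and $P[i']$ the first character of an edge below it — exactly the third invariant.

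\textbf{Main obstacle.} The delicate point, which I expect to be the hard part, is the third invariant — showing that the relevant longest common suffix is always exactly the path label of an internal node followed by one edge character, rather than a string that overshoots into the middle of an edge or falls short of a branching point. This is what makes the z-fast trie / suffixient-set machinery applicable at the next step, and it hinges on the fact that $f$ is a maximal LCP and that $P[1..i']$'s best match in $T$ genuinely branches there; one has to handle the boundary subtleties (e.g.\ the case $f = 0$, where $\alpha c$ itself is already right-maximal, versus $f > 0$) and confirm that no prefix of $T$ matches $P[1..i']$ better than this branching string. A secondary nuisance is bookkeeping for the first invariant: one must verify that the left endpoints $i - \ell$ carried in the variable $\ell$ are the correct maximal left extensions, which requires knowing that $\ell$ at the start of the iteration was the true longest common suffix length for $P[1..i-1]$ (the second invariant, inductively) and that this equals the maximal left extension of any match ending at $i-1$ — essentially because a suffix of $T$ matching $P$ backward from position $i-1$ can be extended left in $P$ exactly as far as it matches some suffix of a prefix of $T$.
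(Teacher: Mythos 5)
Your overall strategy is exactly the paper's: induction on the number of loop iterations, the same three invariants, the same use of the suffixient-set property to argue that $j=\ZFT(P[1..i])$ realizes the longest common suffix of $P[1..i]$ with any prefix of $T$ (hence $b$ is that length), the same characterization of the test $b\le\ell$ as detecting that $P[i-\ell..i]$ does not occur in $T$, and the same use of $f$ to restore the first two invariants at $i'=i+f+1$. Your base case and your treatment of the first two invariants match the paper's proof.

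There is, however, one step in your inductive argument for the third invariant that fails as written, and it is precisely the point you flag as the ``main obstacle'' without resolving it. You assert that the longest common suffix of $P[1..i']$ with a prefix of $T$ ``extends $P[i-b+1..i+f]$ by one more character $P[i']$,'' and from $P[i']\neq T[j+f+1]$ you conclude that $P[i-b+1..i+f]$ is right-maximal. This only works when $P[i-b+1..i']$ actually occurs in $T$; in general the longest common suffix of $P[1..i']$ with a prefix of $T$ may be strictly shorter than $\ell'+1$, because $P[i'-\ell'..i'-1]$ might occur in $T$ followed \emph{only} by characters different from $P[i']$. The paper closes this with an explicit case split: if $P[i'-\ell'..i']$ occurs in $T$, your argument applies verbatim; otherwise one takes the longest suffix $P[i'-\ell''..i']$ of $P[i'-\ell'..i']$ that does occur in $T$ (it is nonempty since $P[i']$ occurs in $T$), and observes that $P[i'-\ell''..i'-1]$, being a suffix of $P[i'-\ell'..i'-1]=T[j-b+1..j+f]$, has an occurrence followed by $T[j+f+1]\neq P[i']$ as well as an occurrence followed by $P[i']$, so it is the path label of a node $u$ and $P[i']$ begins the label of an edge below $u$. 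Without this second case the invariant that feeds the next $\ZFT$ query (and the charging of iterations to edge descents, hence the $d$ in the time bound) is not established.
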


\begin{proof}
We proceed by induction on the number of times we have passed through the loop.  Clearly our inductive hypothesis --- the three points above --- is true when we first reach the start of the loop: $i = 1$, the longest common suffix of the empty prefix of $P$ and any prefix of $T$ has length $\ell = 0$, there are no MEMs in the empty prefix of $P$, the node $u$ is the root of the suffix tree and, since we can assume every character in $P$ occurs in $T$, there is an edge from $u$ to one of its children $v$ whose label starts with $P [i]$.

Assume our inductive hypothesis holds when we have passed through the loop $k \geq 0$ times.  By Definition~\ref{def:suffixient_set}, there is some element $s \in S$ such that $P [i - \ell..i]$ is a suffix of $T [1..s]$.  Therefore, after Lines~\ref{line:ZFT} and~\ref{line:LCS}, $T [1..j]$ has the longest common suffix with $P [1..i]$ of any prefix of $T$ and that common suffix has length $b$.  If and only if $b \leq \ell$ in Line~\ref{line:MEM_check} then the longest common suffix of $P [1..i]$ and any prefix of $T$ is no longer than the longest common suffix $P [i - \ell..i - 1]$ of $P [1..i - 1]$ and any prefix of $T$, so $P [i - \ell..i - 1]$ is a MEM and we correctly report it in Line~\ref{line:report}.

Whether $b \leq \ell$ or $b = \ell + 1$, the next MEM we should report starts at $P [i - b + 1]$.  After Line~\ref{line:LCP} we have $P [i - b + 1..i + f] = T [j - b + 1..j + f]$ but either $i + f + 1 = m + 1$ or $P [i + f + 1] \neq T [j + f + 1]$.  After we reset $i \leftarrow i + f + 1$ and $\ell \leftarrow b + f$ in Lines~\ref{line:update_i} and~\ref{line:update_ell}, respectively, at the end of the loop
\begin{itemize}
\item we have reported all the MEMs that end strictly before $P [i - 1]$,
\item the longest common suffix $P [i - \ell..i - 1] = T [j - b + 1..j + f]$ of $P [1..i - 1]$ and any prefix of $T$ has length $\ell$.
\end{itemize}

\noindent If $i = m + 1$ at the end of the loop then our inductive hypothesis is trivally true after we have passed through the loop $k + 1$ times, so assume $i \leq m$.  To show the final point of our inductive hypothesis holds, we consider two cases.  First, suppose $P [i - \ell..i]$ occurs in $T$; then since $P [i - \ell..i - 1] = T [j - b + 1..j + f]$ but $P [i] \neq T [j + f + 1]$, there are occurrences of $P [i - \ell..i - 1]$ in $T$ followed by different characters, so $P [i - \ell..i - 1]$ is the path label of some node $u$ in the suffix tree of $T$ and $P [i]$ is the first character in the label of the edge from $u$ to one of its children.

Now suppose $P [i - \ell..i]$ does not occur in $T$, meaning $P [i - \ell..i - 1]$ occurs in $T$ followed only by characters different than $P [i]$.  Consider the longest suffix $P [i - \ell'..i]$ of $P [i - \ell..i]$ that does occur in $T$, with $\ell' < \ell$.  Since $P [i - \ell'..i - 1]$ occurs in $T$ followed both by $P [i]$ and by another character, $P [i - \ell'..i - 1]$ is the path label of some node $u$ in the suffix tree of $T$ and $P [i]$ is the first character in the label of the edge from $u$ to one of its children.
\qed
\end{proof}

\noindent Since $i$ increases every time we pass through the loop, the second point of Lemma~\ref{lem:invariant} guarantees we report every MEM, and the third point guarantees that each time we pass through the loop corresponds to a different time we would full or partially descend an edge in the suffix tree of $T$ while finding the MEMs.  This gives us our result:

\begin{theorem}
\label{thm:main}
Given an SLP with $g$ rules for $T [1..n]$, we can build an $O (\bar{r} + g)$-space compressed index for $T$, where $\bar{r}$ is the number of runs in the BWT of the reverse of $T$, such that when given $P [1..m]$ we can find the MEMs of $P$ with respect to $T$ correctly with high probability and in $O (m \log \sigma + d \log n)$ time, where $\sigma$ is the size of the alphabet and $d$ is the number of times we would fully or partially descend edges in the suffix tree of $T$ while finding those MEMs.
\end{theorem}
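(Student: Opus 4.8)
The plan is to assemble the theorem from three ingredients that are already in place: a suffixient set $S$ for $T$ with $|S| \le 2\bar r$, which exists by Lemma~\ref{lem:upper_bound}; a z-fast trie on the reversed prefixes of $T$ ending at the positions in $S$; and the SLP-based $\LCP$/$\LCS$ data structure described in the appendix. For the space bound I would simply add up the components: the set $S$, together with the length of the corresponding prefix of $T$ stored for each of its elements, takes $O(\bar r)$ words; the z-fast trie takes $O(\bar r)$ words; the $\LCP$/$\LCS$ structure takes $O(g)$ words; hence the index occupies $O(\bar r + g)$ space. If $P$ contains characters not occurring in $T$ we first split it as described at the start of Section~\ref{sec:compressed_index} and process the pieces independently, which does not affect the bounds.

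For correctness I would lean entirely on Lemma~\ref{lem:invariant}, applied to Algorithm~\ref{alg:MEMs} run after preprocessing $P$ so that the Karp--Rabin fingerprint of any substring of $P$ is available in $O(1)$ time (the $\ZFT$, $\LCS$ and $\LCP$ calls all rely on this). Since $i$ is incremented by $f+1 \ge 1$ in Line~\ref{line:update_i}, the loop terminates; the second bullet of the lemma, together with the final \textbf{report} statement, ensures that every MEM is output precisely when it cannot be extended to the right, and the first bullet ensures that nothing else is output. Because the z-fast trie and the $\LCP$/$\LCS$ structure decide equality of text substrings only through Karp--Rabin fingerprints, the computation is correct with high probability, by the standard argument that picks a sufficiently large random prime modulus and takes a union bound over the fingerprint comparisons performed during a query.

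For the running time I would argue as follows. Preprocessing $P$ for constant-time substring-fingerprint access costs $O(m\log(\sigma)/\log n)$ time, which is within the claimed bound. Each pass through the loop does one $\ZFT$ query in $O(\log\min(m,n)) = O(\log n)$ time, one $\LCS$ query and one $\LCP$ query, each in $O(\log n)$ time, plus $O(1)$ extra work, so $O(\log n)$ per pass. It remains to bound the number of passes by $O(d)$, which is where the third bullet of Lemma~\ref{lem:invariant} enters: at the start of a pass the longest common suffix of $P[1..i]$ with a prefix of $T$ is the path label of some suffix-tree node $u$ followed by the first character of an edge out of $u$, and reading $P[i]$ corresponds to descending (fully or partially) that very edge in the textbook suffix-tree MEM algorithm; since $i$ strictly increases across passes, distinct passes are charged to distinct descents, and there are $d$ of these. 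Summing yields the query time $O(m\log\sigma + d\log n)$.

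The main obstacle I anticipate is making this charging rigorous: one has to verify that the node/edge pair produced at a given pass really is one of the $d$ edge descents of the naive algorithm (this follows from the case analysis inside the proof of Lemma~\ref{lem:invariant}, which identifies $u$ as the deepest node on the current suffix chain having an out-edge that starts with $P[i]$) and that no two passes are charged to the same descent (this follows because the naive algorithm descends an edge only after consuming at least one fresh character of $P$, so its descents inject into the read positions at which they begin, while Algorithm~\ref{alg:MEMs} begins a pass at each read position at most once). Everything else in the proof is routine accounting over components whose guarantees have already been stated.
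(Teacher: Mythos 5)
Your proposal is correct and follows essentially the same route as the paper: the space bound is the sum of the z-fast trie ($O(\bar r)$, using Lemma~\ref{lem:upper_bound}) and the SLP-based $\LCP$/$\LCS$ structure ($O(g)$), correctness of Algorithm~\ref{alg:MEMs} comes from Lemma~\ref{lem:invariant}, and the $O(d\log n)$ term comes from charging each loop iteration, via the third bullet of that lemma and the strict increase of $i$, to a distinct full or partial edge descent. The only (harmless) quibble is that completeness of the reported MEMs rests on the first bullet of Lemma~\ref{lem:invariant} and soundness on the argument inside its proof, rather than the attribution you give.
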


\begin{credits}

\subsubsection{\ackname} 

This paper is dedicated to Margaret Gagie (1939--2023).  Many thanks to Adri\'an Goga for helpful discussions.
LD was funded by PhD Fellowship FR (1117322N), Research Foundation – Flanders (FWO).
TG was funded by NSERC Discovery Grant RGPIN-07185-2020.
BL was funded by NIH grant R01HG011392.
GM was funded by the Italian Ministry of Health, POS 2014--2020, project ID T4-AN-07, CUP I53C22001300001; INdAM-GNCS Project CUP E53C23001670001; and PNRR ECS00000017 Tuscany Health Ecosystem, Spoke~6 CUP I53C22000780001.
NP was funded by the European Union (ERC, REGINDEX, 101039208).  Views and opinions expressed in this paper are, however, those of the authors only and do not necessarily reflect those of the European Union or the European Research Council.  Neither the European Union nor the granting authority can be held responsible for them.

\subsubsection{\discintname}

The authors have no competing interests to declare that are relevant to the content of this article.

\end{credits}

\appendix

\section{SLP-based LCP/LCS Data Structure}
\label{app:SLP}

Ganardi, Je\.z and Lohrey~\cite{GJL21} showed how, given an SLP for $T [1..n]$ with $g$ rules, we can build another SLP for $T$ with $O (g)$ rules and height $O (\log n)$, so assume without loss of generality that we are given an SLP with height $O (\log n)$.  For each symbol $X$ in the SLP, we store with $X$ the length $|\expansion{X}|$ and the Karp-Rabin hash $h (\expansion{X})$ of $X$'s expansion $\expansion{X}$.  With low probability the hashes of distinct substrings of $P$ and $T$ collide and cause errors.

We find $\LCP (P [i..m], T [j..n])$ recursively, starting with intervals $[i..m]$ and $[j..j + m - i]$ at the root of the parse tree of $T$.  (Finding $\LCS (P [1..i], T [1..j])$ is symmetric, so we omit its description.) Suppose that at some point we have arrived with intervals $[i'..i' + \ell - 1]$ and $[j'..j' + \ell - 1]$ at a symbol $X$, trying to find $\LCP \left( \rule{0ex}{2ex} P [i'..i' + \ell - 1], \expansion{X} [j'..j' + \ell - 1] \right)$.  If $X$ is a terminal then this takes constant time.  If $\ell = |\expansion{X}|$ and $h (P [i'..i' + \ell - 1]) = h (\expansion{X})$ and there is no collision then
\[\LCP \left( \rule{0ex}{2ex} P [i'..i' + \ell - 1], \expansion{X} [j'..j' + \ell - 1] \right)
= \ell\,.\]

\noindent Otherwise, suppose $X \rightarrow Y\,Z$.  If $\expansion{X} [j'..j' + \ell - 1]$ is completely contained in $\expansion{Y}$ then we recurse on $Y$ with the same intervals $[i'..i' + \ell - 1]$ and $[j'..j' + \ell - 1]$ to find
\begin{eqnarray*}
\lefteqn{\LCP \left( \rule{0ex}{2ex} P [i'..i' + \ell - 1], \expansion{X} [j'..j' + \ell - 1] \right)} \\
& = & \LCP \left( \rule{0ex}{2ex} P [i'..i' + \ell - 1], \expansion{Y} [j'..j' + \ell - 1] \right)\,.
\end{eqnarray*}
If $\expansion{X} [j ' - \ell + 1..j']$ is completely contained in $\expansion{Z}$ then we recurse on $Z$ with intervals $[i'..i' + \ell - 1]$ and $[j' - |\expansion{Y}|..j' + \ell - 1 - |\expansion{Y}|]$ to find
\begin{eqnarray*}
\lefteqn{\LCP \left( \rule{0ex}{2ex} P [i'..i' + \ell - 1], \expansion{X} [j'..j' + \ell - 1] \right)}\\
& = & \LCP \left( \rule{0ex}{2ex} P [i'..i' + \ell - 1], \expansion{Z} [j' - |\expansion{Y}|..j' + \ell - 1 - |\expansion{Y}|] \right)\,.
\end{eqnarray*}

\noindent If $\expansion{X} [j'..j' + \ell - 1]$ overlaps both $\expansion{Y}$ and $\expansion{Z}$ with $\ell'$ characters in $\expansion{Y}$ and $\ell - \ell'$ characters in $\expansion{Z}$ then we first recurse on $Y$ with intervals $[i'..i + \ell' - 1]$ and $[j'..|\expansion{Y}|]$ to find $\LCP \left( \rule{0ex}{2ex} P [i'..i' + \ell' - 1], \expansion{Y} [j'..|\expansion{Y}|] \right)$.  If
\[\LCP \left( \rule{0ex}{2ex} P [i'..i' + \ell' - 1], \expansion{Y} [j'..|\expansion{Y}|] \right)
< \ell'\]
then
\begin{eqnarray*}
\lefteqn{\LCP \left( \rule{0ex}{2ex} P [i'..i' + \ell - 1], \expansion{X} [j'..j' + \ell - 1] \right)} \\
& = & \LCP \left( \rule{0ex}{2ex} P [i'..i' + \ell' - 1], \expansion{Y} [j'..|\expansion{Y}|] \right)\,.
\end{eqnarray*}
Otherwise,
\begin{eqnarray*}
\lefteqn{\LCP \left( \rule{0ex}{2ex} P [i'..i' + \ell - 1], \expansion{X} [j'..j' + \ell - 1] \right)} \\
& = & \LCP \left( \rule{0ex}{2ex} P [i' + \ell'..i' + \ell - 1], \expansion{X} [1..\ell - \ell'] \right) + \ell'
\end{eqnarray*}
and we compute $\LCP \left( \rule{0ex}{2ex} P [i' + \ell'..i' + \ell - 1], \expansion{X} [1..\ell - \ell'] \right)$ by recursing on $Z$ with intervals $[i' + \ell'..i + \ell - 1]$ and $[1..\ell - \ell']$.

To see why this recursion takes $O (\log n)$ time, consider it as a binary tree.  Let $v$ be a leaf of that tree and let $u$ be its parent.  The expansion of the symbol corresponding to $v$ is completely contained in $T \left[ \rule{0ex}{2ex} j..j + \LCP (P [i..m], T [j..n]) - 1 \right]$, but the expansion of the symbol $X_u$ corresponding to $u$ is not.  This means $X_u$ is either on the path from the root of the parse tree to the $j$th leaf,  or on the path from the root to the $\left( \rule{0ex}{2ex} j + \LCP (P [i..m], T [j..n]) - 1 \right)$st leaf.  Since those paths have length $O (\log n)$, the recursion takes $O (\log n)$ time.

\section{Proof of Lemma~\ref{lem:attractor}}
\label{app:omitted_proof}

\begin{proof}
Consider a non-empty substring $T [i..j]$ of $T$, let $v$ be the locus of $T [i..j]$ in the suffix tree of $T$ (that is, the highest node whose path label is prefixed by $T [i..j]$), let $u$ be $v$'s parent, let $\alpha$ be $u$'s path label, and let $c$ be the first character of $(u, v)$'s label.  Then $\alpha\,c$ is a prefix of $T [i..j]$ and, since $(u, v)$ is a single edge, any occurrence of $\alpha\,c$ in $T$ is contained in an occurrence of $T [i..j]$.  By Definition~\ref{def:suffixient_set}, there is some $s \in S$ such that $\alpha\,c$ is a suffix of $T [1..s]$, so $T [s]$ is contained in an occurrence of $\alpha\,c$ and thus contained in an occurrence of $T [i..j]$.
\qed
\end{proof}

\pagebreak

\section{Omitted Figures}
\label{app:omitted_figures}

\begin{figure}[h!]
\begin{center}
\resizebox{\textwidth}{!}
{\begin{tabular}{rr|ll}
$T [1..35] =$  & \sf \textcolor{blue}{0100101001001010010100100101001001\$} &                                              &                \\[2ex]
$P [3..22] =$  &                  \sf \textcolor{red}{01001010010010100100} & \sf \textcolor{red}{1}                       & $= P [23]$     \\
$T [1..33] =$  &    \sf \textcolor{blue}{010010100100101001010010010100100} & \sf \textcolor{blue}{1\$}                    & $= T [34..35]$ \\[2ex]
$P [3..16] =$  &                        \sf \textcolor{red}{01001010010010} & \sf \textcolor{red}{10010}                   & $= P [17..21]$ \\
$P [11..24] =$ &                        \sf \textcolor{red}{01001010010010} & \sf \textcolor{red}{1001010010}              & $= P [25..34]$ \\                
$T [1..14] =$  &                       \sf \textcolor{blue}{01001010010010} & \sf \textcolor{blue}{10010100100101001001\$} & $= T [15..35]$ \\[2ex]
$P [1] =$      &                                     \sf \textcolor{red}{1} & \sf \textcolor{red}{00100101001001}          & $= P [2..15]$  \\
$T [1..20] =$  &                 \sf \textcolor{blue}{01001010010010100101} & \sf \textcolor{blue}{00100101001001\$}       & $= T [21..35]$
\end{tabular}}
\caption{The prefixes of $T$ ending at positions in the suffixient set $\{14, 20, 33, 35\}$ in our example {\bf (left, in blue)}, the suffixes of $T$ that immediately follow them {\bf (right, in blue)}, the longest common prefixes of those prefixes of $T$ with the prefixes of $P$ we consider with Algorithm~\ref{alg:MEMs} {\bf (left, in red)}, and the longest common suffixes of those suffixes of $T$ with the remaining suffixes of $P$ {\bf (right, in red)}.}
\label{fig:substrings}
\end{center}
\end{figure}

\begin{figure}[h!]
\begin{center}
\begin{tabular}{c@{\hspace{8ex}}c}
\begin{tabular}{rl}
 1 & $i \leftarrow 1$ \\
 2 & $\ell \leftarrow 0$ \\[2ex]
 3 & $1 \leq 34$ \\
 4 & $j \leftarrow \ZFT (P [1]) = 20$ \\
 5 & $b \leftarrow \LCS (P [1], T [1..20]) = 1$ \\
 6 & $1 \not \leq 0$ \\
 9 & $f \leftarrow \LCP (P [2..34], T [21..35]) = 14$ \\
10 & $i \leftarrow 16$ \\
11 & $\ell \leftarrow 15$ \\[2ex]
 3 & $16 \leq 34$ \\
 4 & $j \leftarrow \ZFT (P [1..16]) = 14$ \\
 5 & $b \leftarrow \LCS (P [1..16], T [1..14]) = 14$ \\
 6 & $14 \leq 15$ \\
 7 & {\bf report} $(1, 15)$ \\
 9 & $f \leftarrow \LCP (P [17..34], T [15..35]) = 5$ \\
10 & $i \leftarrow 22$ \\
11 & $\ell \leftarrow 19$ \\[2ex]
\end{tabular}
&
\begin{tabular}{rl}
 3 & $22 \leq 34$ \\
 4 & $j \leftarrow \ZFT (P [1..22]) = 33$ \\
 5 & $b \leftarrow \LCS (P [1..22], T [1..33]) = 20$ \\
 6 & $20 \not \leq 19$ \\
 9 & $f \leftarrow \LCP (P [23..34], T [34..35]) = 1$ \\
10 & $i \leftarrow 24$ \\
11 & $\ell \leftarrow 21$ \\[2ex]
 3 & $24 \leq 34$ \\
 4 & $j \leftarrow \ZFT (P [1..24]) = 14$ \\
 5 & $b \leftarrow \LCS (P [1..24], T [1..14]) = 14$ \\
 6 & $14 \leq 21$ \\
 7 & {\bf report} $(3, 23)$ \\
 9 & $f \leftarrow \LCP (P [25..34], T [15..35]) = 10$ \\
10 & $i \leftarrow 35$ \\
11 & $\ell \leftarrow 24$ \\[2ex]
 3 & $35 \not \leq 34$ \\
13 & {\bf report} $(11, 34)$
\end{tabular}
\end{tabular}
\caption{A trace of how Algorithm~\ref{alg:MEMs} runs on our example.}
\label{fig:trace}
\end{center}
\end{figure}

\end{document}